\theoremstyle{definition} 
\newtheorem{definition}{Definition}
\newtheorem{remark}{Remark}
\newtheorem{assumption}{Assumption}
\newtheorem{problem}{Problem}
\theoremstyle{plain}      
\newtheorem{theorem}{Theorem}
\newtheorem{property}{Property}
\title{\LARGE \bf
Robust Data-Driven Invariant Sets for Nonlinear Systems
}
\author{Sahand Kiani and Constantino M. Lagoa
\thanks{The authors are with the Department of Electrical and Computer Engineering, Pennsylvania State University, State College, PA 16801, USA. S. Kiani is a PhD student (e-mail: szk6437@psu.edu). C. M. Lagoa is a full professor (e-mail: cml18@psu.edu).}%
}
\begin{document}

\maketitle
\thispagestyle{empty}
\pagestyle{empty}

\begin{abstract}
The synthesis of robust invariant sets for nonlinear systems has traditionally been hindered by the inherent non-convexity and a strict reliance on exact analytical models. This paper presents a purely data-driven framework to compute robust polytopic contractive sets for unknown nonlinear systems operating under persistent bounded process noise and state-input constraints. Rather than attempting to identify a single, potentially nominal model, we utilize a finite data set to construct a polytopic consistency set—a rigorous geometric boundary encapsulating all possible system dynamics compatible with the noisy measurements. The core contribution of this work extends an established sufficient condition for $\lambda$-contractiveness into the data-driven setting. Crucially, we prove that enforcing this condition strictly over the vertices of the consistency set  guarantees robust invariance.
\end{abstract}

\begin{keywords}
    Invariant Sets, Nonlinear Systems, Difference of Convex Functions, Set-theoretic Methods 
\end{keywords}
\section{INTRODUCTION}

One of the aspects in the study of dynamical systems is the analysis of their invariant sets \cite{blanchini1999set}. The task of computing such a set is of significant practical importance, as these sets not only guarantee invariance but are also associated with the requirements of safety and stability of systems in many applications \cite{mayne2000constrained,kiani2024learning}. This property makes these sets essential for designing high-performance controllers, particularly for their use as terminal sets in Model Predictive Control to ensure recursive feasibility and stability. Approximating invariant sets for nonlinear systems is significantly more complex than for their linear counterparts, primarily due to the non-convex nature of the resulting sets and the requirement for robustness against uncertainty. These challenges are magnified in the data-driven setting, where the lack of an explicit model invalidates most traditional analysis tools.

\subsection{Related Work}

The computation of invariant sets for nonlinear systems has been addressed using various optimization-based techniques. For instance, the authors in \cite{10123035} propose a method for perturbed nonlinear sampled-data systems by formulating the problem as a non-convex program. They introduce a tailored successive convexification algorithm that leverages a zonotopic representation for the sets and is designed to find a large region of safe operation by ensuring the sequence of candidate sets is monotonically increasing in volume. Other approaches for handling constrained nonlinear systems focus on approximating the one-step reachable set. For example, the work in \cite{bravo2005computation} presents a method to compute a guaranteed inner approximation of the set for known nonlinear models. The authors develop a specialized branch and bound algorithm based on interval arithmetic, which is followed by a second procedure to find a simple inner polytopic representation of the computed region.

An alternative line of research approaches the computation of invariant sets using graph-based methods. The authors in \cite{decardi2021computing}, for example, cast the problem of finding the largest robust control invariant sets for constrained nonlinear systems as a problem in graph theory. Their methodology involves approximating the system dynamics with a directed graph and then applying an iterative subdivision algorithm to generate a sequence of outer approximations that is proven to converge to the true largest robust control invariant set.

Existing methods for computing robust invariant sets, such as successive convexification, interval arithmetic, and graph-based discretization, are fundamentally limited by their reliance on an explicit mathematical model. This limitation motivates the purely data-driven framework proposed in this paper.
\subsection{Key Contribution}

The primary contribution of this paper is the development of a data-driven framework for approximating a robust convex polytopic contractive set for a class of noisy nonlinear systems. We consider a class of uncertain nonlinear systems characterized by known polynomial basis functions with unknown parameters. The system is subject to persistent, bounded process noise as well as both state and input constraints. The core theoretical contribution is the extension of an established sufficient condition for contractiveness—originally developed for known nonlinear models subject to noise—to the data-driven setting. We achieve robustness by requiring that the computed set be invariant for all system dynamics compatible with the finite collected noisy measurements. Finally, leveraging the properties of DC functions, we present a tractable convex optimization algorithm that connects this condition to compute the largest possible $\lambda$-contractive set within the state constraints. 

\textbf{Notation.}
The set of real numbers is denoted by $\mathbb{R}$, and the $n$-dimensional Euclidean space is $\mathbb{R}^n$. The state vector at time step $k$ is denoted by $x_k \in \mathbb{R}^n$, with its $j$-th component denoted by $x_{j,k}$. The operator $\nabla_x$ denotes the gradient with respect to the vector $x$. The infinity norm of a vector is denoted by $\|\cdot\|_{\infty}$. For every $n \in \mathbb{N}$, define $\mathbb{N}_n = \{i \in \mathbb{N} : 1 \le i \le n\}$. The vector of all ones with the proper dimension is denoted by $\mathbf{1}$. The interior of a set is denoted by $\text{int}(\cdot)$. The notation $\lambda\Omega$ represents the scaling of the set $\Omega$ by the constant scalar $\lambda$. Consider the matrix $H \in \mathbb{R}^{n \times n}$, the notation \textbf{$H_i$} represents the $i$-th row of this matrix. Throughout this paper, the notation $\theta = \text{vec}(\Theta^T)$ defines the vector $\theta$ as the vectorization of the transpose of the matrix $\Theta$. Given a set $P$, co($P$) denotes its convex hull.

\section{Preliminary Results}\label{sec:preliminaries}

In this paper, we consider the following discrete-time nonlinear system:
\begin{equation} \label{eq:system}
x_{k+1} = f(x_k, u_k) + w_k,
\end{equation}
where $x_k \in X \subseteq \mathbb{R}^n$ is the state of the system, $u_k \in U \subseteq \mathbb{R}^m$ is the control input, and $w_k \in W \subset \mathbb{R}^n$ is an unknown but bounded uncertainty such that $\|w\|_\infty \leq \epsilon$ where $\epsilon$ is assumed to be known.

\begin{assumption}
The sets $X$, $U$, and $W$ are assumed to be compact, convex, and to contain the origin in their interiors.\label{assump1}
\end{assumption}

\subsection{Difference of Convex Functions}

A key advantage of the DC framework is its generality, as a wide class of nonlinear functions can be represented as the Difference of Convex (DC) functions. This structure is particularly useful because it allows for the application of powerful tools from convex analysis to otherwise non-convex problems. Therefore, we assume the following for the system (\ref{eq:system}):

\begin{assumption}
The system dynamics vector, represented by $f(x,u): X \times U \to \mathbb{R}^n$, is assumed to be a vector of DC functions, differentiable at $(0,0)$. This implies there exist two functions $g_j, h_j: X \times U \to \mathbb{R}$ for all $j \in \mathbb{N}_n$ of vector function $f(x,u)$, which are convex $\forall (x_k, u_k) \in X \times U$, that satisfy the following conditions:
\begin{align*}
    &f_j(x_k, u_k) = g_j(x_k, u_k) - h_j(x_k, u_k),  \forall (x_k, u_k) \in X \times U,  \\
    &g_j(0, 0) = h_j(0, 0) = 0, \quad \forall j \in \mathbb{N}_n,
\end{align*}\label{assump2}
\end{assumption}
\vspace{-25pt}
\begin{remark}
The class of DC functions is quite large: Every function with continuous second partial derivatives is a DC function on a convex compact set of its domain \cite{horst1999dc}. Consequently, the requirement in Assumption \ref{assump2} that the system dynamics need to be a DC function is not a significant restriction, ensuring that the methods proposed in this paper are applicable to a broad class of nonlinear systems.
\end{remark}

With this functional framework established, we now introduce the key definitions of $\lambda$-contractiveness and sufficient condition for a set to be contractive for nonlinear systems that are fundamental to our proposed method.

\subsection{$\lambda$-Contractiveness for Nonlinear Systems}

Before presenting our main results, it is necessary to review the key concepts from set-theoretic control upon which our work is built. Therefore, we start with the following definitions:

\begin{definition}
A set $\Omega \subseteq X$ is a robust control invariant set for the system $x_{k+1} = f(x_k, u_k) + w_k$ subject to constraints $x_k \in X$ and $u_k \in U$ if, for every $x_k \in \Omega$, there exists a control input $u_k \in U$ such that $f(x_k, u_k) + w_k \in \Omega$ for all $w_k \in W$.
\end{definition}
 
\begin{definition}\cite{272351}: 
    Given the contraction factor $\lambda \in [0, 1]$, a convex compact set $\Omega$ with $0 \in \text{int}(\Omega)$ is (robust) $\lambda$-contractive for system $x_{k+1} = f(x_k, u_k) + w_k$ if for any $x_k \in \Omega \subseteq X$, there exists an input $u_k \in U$ such that $f(x_k, u_k) + w_k \in \lambda\Omega$, $\forall w_k \in W$.
\end{definition} 

The definition of a $\lambda$-contractive set is a generalization of an invariant set. In the specific case where the contraction factor is set to $\lambda = 1$, the condition $f(x_k, u_k) + w_k \in \lambda\Omega$ becomes identical to the condition for invariant sets.

\begin{remark}
In the specific case where the set $\Omega$ is a polytope described by a system of linear inequalities, i.e., $\Omega = \{x \in \mathbb{R}^n : Hx \le \textbf{1}\}$, the condition for $\lambda$-contractiveness can be stated as follows: For every $x_k \in \Omega \subseteq X$, there exists a control input $u_k \in U$ such that $\forall w_k \in W$, the resulting $f(x_k, u_k) + w_k$ satisfies the polytope's defining inequalities scaled by $\lambda$: $H(f(x_k, u_k) + w_k) \le \lambda \mathbf{1}$.
\end{remark}

Building on the definition of a robust control invariant set, we now review a key result from \cite{fiacchini2010computation}. The main contribution of that work was to utilize the properties of DC functions to establish a sufficient condition for a set to be $\lambda$-contractive for nonlinear systems.

\begin{definition}\cite{boyd2004convex}: 
Given a set $\Gamma \subseteq \mathbb{R}^n$, the support function of $\Gamma$ evaluated at $c \in \mathbb{R}^n$ is defined as: $\phi_{\Gamma}(c) = \sup_{x \in \Gamma} c^T x$.    
\end{definition}

\begin{definition}\cite{fiacchini2010computation}: 
Let Assumptions \ref{assump1} and \ref{assump2} hold. Given the vector of DC function $f(\cdot, \cdot) : X \times U \to \mathbb{R}^n$, whose components are decomposed as $f_j(x,u) = g_j(x,u) - h_j(x,u), \forall j \in \mathbb{N}_n$ with convex functions $g_j(x,u)$ and $h_j(x,u)$, and $c \in \mathbb{R}^n$, define $F(x, u, c) : X \times U \times \mathbb{R}^n \to \mathbb{R}$ as the function
\begin{equation}
\begin{aligned}
F(x, u, c) = &\sum_{j \in b_+} c_j(g_j(x, u) - h_j^L(x, u)) \\
&+ \sum_{j \in b_-} c_j(g_j^L(x, u) - h_j(x, u)),
\end{aligned}\label{eq:F}
\end{equation}
\end{definition} 
\hspace{-10pt}where $g_j^L(x, u) = \nabla_x g_j(0, 0)x + \nabla_u g_j(0, 0)u$ and $h_j^L(x, u) = \nabla_x h_j(0, 0)x + \nabla_u h_j(0, 0)u$, for $j \in \mathbb{N}_n$, and $b_+ = b_+(c) = \{j \in \mathbb{N}_n : c_j \ge 0\}$ and $b_- = b_-(c) = \{j \in \mathbb{N}_n : c_j < 0\}$.

The function $F(x, u, c)$ exhibits the following important property:

\begin{property}
 \cite{fiacchini2010computation}: Let Assumptions 1 and 2 hold. Given the DC function $f(\cdot, \cdot) : X \times U \to \mathbb{R}^n$ as in (1), then for every $c \in \mathbb{R}^n$, the function $F(\cdot, \cdot, c)$ defined in (2) is convex in $(x, u) \in X \times U$. Moreover, it can be shown that for any $c \in \mathbb{R}^n$, the inequality $c^T f(x,u) \leq F(x, u, c)$ holds $\forall (x,u) \in X \times U$.\label{propconv}  
\end{property}

Now the result will be used to provide a sufficient condition for robust control invariance of a polytope for the uncertain nonlinear system (1). In what follows, given a polytope $\Omega = \{x \in \mathbb{R}^n : Hx \le \mathbf{1}\} \subseteq X$, its $n_v$ vertices are denoted $v^p \in \mathbb{R}^n$, for $p \in \mathbb{N}_{n_v}$, and $n_h$ are the rows of $H$, i.e. $H \in \mathbb{R}^{n_h \times n}$.

The following property provides the necessary and sufficient condition for $\lambda$-contractiveness (and then for robust control invariance) for system (1) as follows:

\begin{property} \cite{fiacchini2010computation}:
Let Assumptions \ref{assump1} and \ref{assump2} hold. Consider a polytope $\Omega = \{x \in \mathbb{R}^n : Hx \le \mathbf{1}\} \subseteq X$, and DC system (1) which is subject to process noise. If there exist control actions defined at the vertices $u^p = u_v^p \in U$, $\forall p \in \mathbb{N}_{n_v}$, such that
\begin{equation}
    F(v^p, u_v^p, H_i^T) \le \lambda_\omega - \phi_W(H_i^T), \quad \forall p \in \mathbb{N}_{n_v}, \forall i \in \mathbb{N}_{n_h}.\label{condi}
\end{equation}
where $\lambda_\omega \in [0, 1]$ represents the contraction factor for noisy nonlinear system, then $\Omega$ is a $\lambda_\omega$-contractive set for system (1) and constraints $x \in X, u \in U$.
\end{property}\label{prop:robust_contractiveness}


\section{Problem Statement}\label{sec:problem_statement}

This section formulates the data-driven synthesis of robust $\lambda$-contractive sets for the unknown nonlinear system \eqref{eq:system}. We introduce the following standing assumptions to ensure the problem is well-posed and computationally tractable:

\begin{assumption} \label{assum:model_structure}
The nonlinear function $f(x_k,u_k)$ is assumed to lie in the span of a known dictionary of $d_f$ basis functions, $\phi(x,u): X \times U \to \mathbb{R}^{d_f}$. Consequently, there exists an unknown parameter matrix $\Theta \in \mathbb{R}^{n \times d_f}$ such that the system dynamics can be expressed as $f(x_k,u_k) = \Theta\phi(x_k,u_k)$. \label{assump3}
\end{assumption}

Under these assumptions, the problem of computing the maximal robust invariant set is defined as:

\begin{problem} \label{prob:main_problem_revised}
Under Assumptions (\ref{assump1}--\ref{assump3}), the objective is to compute the largest polytopic robust $\lambda$-contractive set, $\Omega \subseteq X$, using the finite dataset $\mathcal{D}=\left\{x_k, u_k, x_{k+1}\}_{k=0}^{T-1}\right.$ that is generated from the noisy nonlinear system (\ref{eq:system}). The computed set must be robust $\lambda$-contractive for all possible system dynamics that are consistent with the data dictionary, $\mathcal{D}$, and the known noise upper bound $\epsilon$.
\end{problem}

\section{Data-Driven Computation of Robust Contractive Set}\label{sec:main_results}

In this section, we address Problem \ref{prob:main_problem_revised} by developing a constructive framework for robust set synthesis. Our approach relies on a set of feasible system parameters, referred to as the consistency set, defined as follows:

\begin{definition}
The feasible parameter set of all systems compatible with the data dictionary $\mathcal{D}=\left\{x_k, u_k, x_{k+1}\}_{k=0}^{T-1}\right.$ and the known noise upper bound $\epsilon$ is the consistency set which is defined as follows
\begin{equation}\label{eq:cons}
\begin{aligned}
    \mathcal{P}_1 \doteq \Big\{ \Theta \in \mathbb{R}^{n \times d_f} : \;& \|x_{k+1} - \Theta\phi(x_k,u_k)\|_{\infty} \le \epsilon, \\
    & \forall k \in \{0, \dots, T-1\} \Big\}.
\end{aligned}
\end{equation}
\end{definition}

The consistency set defined in \eqref{eq:cons} is a polytope and can be expressed in the following equivalent form:

\begin{equation}
    \mathcal{P}_1 = \left\{\theta : \begin{bmatrix} A \\ -A \end{bmatrix} \theta \le \begin{bmatrix} \epsilon \mathbf{1} + \xi \\ \epsilon \mathbf{1} - \xi \end{bmatrix} \right\}, \label{eq:consisten}
\end{equation}
where $\theta = \text{vec}(\Theta^T)$ and $\epsilon$ are the unknown parameters and known noise upper bound, respectively. The matrices $A, \xi$ are functions of the collected data:
\begin{equation}
    A \doteq \begin{bmatrix} I \otimes \phi^T(x_0,u_0) \\ \vdots \\ I \otimes \phi^T(x_{T-1}, u_{T-1}) \end{bmatrix}, \quad \xi \doteq \begin{bmatrix} x_1 \\ \vdots \\ x_T \end{bmatrix},
\end{equation}

The main reason for expressing the consistency set in this polytopic form is computational. The polytopic representation of the consistency set for linear systems is investigated in \cite{zheng2025robust}. This representation allows the set of all possible system parameters to be finitely characterized by its vertices.

\begin{assumption} \label{assump4}
Enough data, in the form of a dataset $\mathcal{D} = \{(x_k, u_k, x_{k+1})\}_{k=0}^{T-1}$, has been collected from trajectories of the system \eqref{eq:system} such that the data matrix $A$ has full column rank. This ensures that the consistency set $\mathcal{P}_1$ is compact.
\end{assumption}

\begin{remark}
Assumption \ref{assump4} is critical for the proposed framework. The full column rank of matrix $A$ guarantees that the consistency set $\mathcal{P}_1$ is compact, and therefore has a finite diameter. If this condition were not met, the set of possible system parameters would be unbounded.
\end{remark}

By Assumption \ref{assump4}, the consistency set is defined by the convex hull of its vertices. These vertices can be computed using standard vertex enumeration algorithms\cite{avis1991pivoting}. We denote the set of all computed vertices of the consistency set to be $Q = \{\theta^{*,1}, \theta^{*,2}, \dots, \theta^{*,n_q}\}$ that contains $n_q$ vertices in total.

We now explicitly characterize the dependence of the function F in \eqref{eq:F} on the system parameters through the following relationships: Each computed vertex of the consistency set represents a distinct possible model parameter for the noisy nonlinear system. For each vertex $s \in \mathbb{N}_{n_q}$, we define $\Theta^{*,s}$ as the matrix representation of $\theta^{*,s}$. The associated dynamic is:
\begin{equation}
     f(x_k,u_k; \theta^{*,s}) = \Theta^{*,s}\phi(x_k,u_k), \forall 
     s\in \mathbb{N}_{n_q}, 
\end{equation}

Since Assumption~\ref{assump2} holds, the convex components of $f$, $g$ and $h$, inherit this dependency, such that for each vertex we have the decomposition:
\begin{equation}
    \begin{aligned}
        &f(x_k,u_k; \theta^{*,s}) = g(x_k,u_k ; \theta^{*,s}) - h(x_k,u_k; \theta^{*,s}), \forall s \in \mathbb{N}_{n_q}
    \end{aligned}
\end{equation}
where $g(\cdot ; \theta^{*,s}),h(\cdot ; \theta^{*,s})$ represents vector of convex function associated with $s^{th}$ component of the set $Q$.

\begin{remark}
Because the basis functions $\phi$ are known, decomposing $f$ into convex components $g$ and $h$ is computationally tractable using established methods like the DC Sums-of-Squares (DC-SOS) framework \cite{niu2018difference}, which leverages spectral decomposition and recursive monomial factorization.
\end{remark}

Consequently, the function $F$, which is constructed from these DC components and their linearizations, is also parameterized by the specific vertex $\theta^{*,s} \in Q$ under consideration. This leads to the following formal expression:

\begin{definition} \label{def:F_parameterized}
For any vector $c \in \mathbb{R}^n$, we define the function $F(x, u, c; \theta^{*,s})$ as:
\begin{equation} \label{eq:F_theta}
\begin{aligned}
    F(x, u, c; \theta^{*,s}) = &\sum_{j \in b_+} c_j \left( g_j(x, u; \theta^{*,s}) - h_j^L(x, u; \theta^{*,s}) \right) \\
    &+ \sum_{j \in b_-} c_j \left( g_j^L(x, u; \theta^{*,s}) - h_j(x, u; \theta^{*,s}) \right),
\end{aligned}
\end{equation}
where $g_j(\cdot, \cdot; \theta^{*,s})$ and $h_j(\cdot, \cdot; \theta^{*,s})$ are the $j^{th}$ convex components of the system dynamics corresponding to the vertex $\theta^{*,s}$ where $s \in \mathbb{N}_{n_q}$, and $g_j^L$, $h_j^L$ are their respective linearizations at the origin. The index sets are defined as $b_+ = b_+(c) = \{j \in \mathbb{N}_n : c_j \ge 0\}$ and $b_- = b_-(c) = \{j \in \mathbb{N}_n : c_j < 0\}$
\end{definition}

Having redefined the function $F$ to account for the parameter uncertainty, we now propose the following sufficient condition for a set to be robust $\lambda_\omega$-contractive for the unknown noisy nonlinear system.

\begin{theorem}\label{theo}
Let Assumptions \ref{assump1}--\ref{assump4} hold. Consider a polytope $\Omega = \{x \in \mathbb{R}^n : Hx \le \mathbf{1}\} \subseteq X$. If there exist control actions $u^p = u^p_v \in U$ defined for each vertex $v^p$ of $\Omega$, such that
\begin{equation}
    F(v^p, u^p_v, H_i^T; \theta^{*,s}) \le \lambda_\omega - \phi_W(H_i^T), \forall s \in \mathbb{N}_{n_q},
\end{equation}
holds for all polytope facets $i \in \mathbb{N}_{n_h}$, then for a given $\lambda_\omega \in [0, 1]$, the polytope $\Omega$ is a $\lambda_\omega$-contractive set for all systems compatible with data dictionary $\mathcal{D}$.
\end{theorem}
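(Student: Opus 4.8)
The plan is to exploit two nested convexity structures: convexity in the \emph{parameter space}, since Assumption~\ref{assump4} makes the consistency set compact so that $\mathcal{P}_1 = \mathrm{co}(Q)$, and convexity in the \emph{state-input space}, since each $F(\cdot,\cdot,c;\theta^{*,i})$ is convex by Property~\ref{propconv}. The objective is to show that the finitely many vertex conditions in the hypothesis certify $\lambda$-contractiveness for the entire continuum of compatible models and for every state in $\Omega$, all using a single control law. Crucially, I would avoid ever constructing a DC decomposition for an interior parameter and instead bound interior models only through the vertex functions.

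First I would fix an arbitrary compatible parameter $\theta \in \mathcal{P}_1$ and write it as a convex combination of vertices, $\theta = \sum_{i=1}^{q} \alpha_i \theta^{*,i}$ with $\alpha_i \ge 0$ and $\sum_i \alpha_i = 1$. Next I would fix an arbitrary $x_k \in \Omega$, express it over the vertices of the polytope as $x_k = \sum_{p} \mu_p v^p$ with $\mu_p \ge 0$ and $\sum_p \mu_p = 1$, and propose the single control $u_k = \sum_p \mu_p u^p_v$, which lies in $U$ by convexity of $U$ (Assumption~\ref{assump1}). The key observation is that this one control must work simultaneously for every vertex model, which is exactly what the hypothesis guarantees, because the same $u^p_v$ appears in the condition for every $\theta^{*,i} \in Q$.

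The core of the argument verifies, facet by facet, that $H_l(f(x_k,u_k;\theta)+w_k) \le \lambda_w$ for every $w_k \in W$. Writing $c = H_l^T$, I would chain four inequalities: (i) linearity of $f$ in the parameter under Assumption~\ref{assump3} gives $c^T f(x_k,u_k;\theta) = \sum_i \alpha_i\, c^T f(x_k,u_k;\theta^{*,i})$; (ii) the bounding property $c^T f(x,u;\theta^{*,i}) \le F(x,u,c;\theta^{*,i})$ applied at each vertex model; (iii) convexity of $F(\cdot,\cdot,c;\theta^{*,i})$ (Property~\ref{propconv}) together with the representations $x_k=\sum_p\mu_p v^p$ and $u_k=\sum_p \mu_p u^p_v$ yields $F(x_k,u_k,c;\theta^{*,i}) \le \sum_p \mu_p F(v^p,u^p_v,c;\theta^{*,i})$; and (iv) the theorem's hypothesis bounds each $F(v^p,u^p_v,c;\theta^{*,i})$ by $\lambda_w - \phi_W(c)$. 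Taking the convex combinations first over $p$ and then over $i$ collapses both weighted sums, since the weights sum to one, and produces $c^T f(x_k,u_k;\theta) \le \lambda_w - \phi_W(c)$. Adding the worst-case noise term $H_l w_k \le \phi_W(H_l^T) = \phi_W(c)$ gives $H_l(f(x_k,u_k;\theta)+w_k)\le \lambda_w$. Since the facet index $l$, the state $x_k$, and the parameter $\theta$ were arbitrary and $\Omega \subseteq X$, this establishes that $\Omega$ is $\lambda$-contractive for every system compatible with $\mathcal{D}$.

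I expect the main obstacle to be the simultaneous handling of the two convexity layers with a single control law, that is, keeping the parameter-space weights $\alpha_i$ and the state-space weights $\mu_p$ consistent through the entire chain so that both reductions (from all of $\mathcal{P}_1$ to the $\theta^{*,i}$, and from all of $\Omega$ to its vertices $v^p$) can be carried out against the same $u_k$. A secondary subtlety worth flagging explicitly in the write-up is why it is legitimate to bound interior models purely through the vertex functions $F(\cdot,\cdot,c;\theta^{*,i})$: this sidesteps the non-uniqueness and sign-dependence of the DC decomposition of $\Theta\phi$ for an interior $\theta$, and it is precisely the linear-in-parameters structure of Assumption~\ref{assump3} that makes step (i) --- and hence the whole reduction --- valid.
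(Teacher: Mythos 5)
Your proposal is correct and follows essentially the same route as the paper's proof: both express an arbitrary compatible parameter as a convex combination of the consistency-set vertices (valid by Assumption~\ref{assump4} and the linear-in-parameter structure of Assumption~\ref{assump3}), use the same convex-combination control law $u_k=\sum_p \mu_p u^p_v$, and chain the bounding property $c^T f \le F$, convexity of $F$ in $(x,u)$ (Property~\ref{propconv}), the vertex hypothesis, and the support-function bound on the noise. The only cosmetic difference is ordering: the paper first forms $F(\cdot,\cdot,c;\theta)$ for an interior parameter as the convex combination of the vertex $F$'s and then applies the bounding property, whereas you bound interior models directly through the vertex functions without ever defining an interior-model $F$ --- the resulting inequality chain is identical.
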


\begin{proof}
    Please check the Appendix.
\end{proof}
Based on Theorem \ref{theo}, we now present Algorithm 1. This optimization-based procedure is designed to compute the largest possible polytopic $\lambda_\omega$-contractive set within the state constraints for the unknown noisy nonlinear system.

\begin{algorithm}[t]
\caption{Computation of a $\lambda_\omega$-Contractive Set}
\label{alg:data_driven_compute_set}
\begin{algorithmic}[1]
    \State \textbf{Given} $\Omega_0$, $Q = \{\theta^{*,1}, \dots, \theta^{*,n_q}\}$, $\lambda_\omega$, 
    \For{$s = 1, \dots, n_q$}
        \For{$p = 1, \dots, n_v$}
            \State solve the following for $\alpha^{s,p}$ and $\forall i \in \mathbb{N}_{n_h}$:
            \State \quad $\alpha^{s,p} = \max\limits_{\gamma^{p} > 0, u^{p} \in U} \gamma^{p}$
            \State \text{s.t. } $F(\gamma^{p} v^{p}, u^{p}, H_i^T; \theta^{*,s}) \le \lambda_\omega \gamma^{p} - \phi_W(H_i^T)$
        \EndFor
    \EndFor
    \State $\alpha = \min_{s \in \{1,\dots,n_q\}, p \in \{1,\dots,n_v\}} \{\alpha^{s,p}\}$
    \State \textbf{return} $\alpha\Omega_0$
\end{algorithmic}
\end{algorithm}

It is worth mentioning that the variable $\gamma^p$ acts as a scalar multiplier determining the maximum permissible scaling of each candidate vertex $v^p$ along its original direction.
\begin{remark}
The computational complexity involves two stages: First, enumerating the $n_q$ vertices of the consistency set from $2T$ inequalities scales exponentially with the parameter dimension. Second, the optimization in Algorithm \ref{alg:data_driven_compute_set} scales proportionally to $n_q \times n_v \times n_h$. Therefore, selecting the initial candidate polytope $\Omega_0$ presents a clear trade-off: increasing its complexity yields a less conservative $\lambda$-contractive set, but directly increases the computational load.
\end{remark}

\begin{remark}
The DC decomposition ($f = g - h$) relaxes the intractable non-convex invariance condition into a tractable convex optimization problem by constructing a convex upper bound, $F$. Crucially, this decomposition is not unique. This flexibility can be exploited to reduce conservatism: by evaluating multiple valid decompositions for a fixed initial polytope $\Omega_0$, one can select the decomposition that maximizes the scaling factor $\alpha$, resulting in the least conservative $\lambda$-contractive set.
\end{remark}
\subsection{Enlarging $\lambda$-Contractive Set}

The method presented in Algorithm \ref{alg:data_driven_compute_set} is effective at finding the largest $\lambda$-contractive set for a \textit{given} initial polytope shape. However, the final result is inherently limited by this initial choice and may be a conservative approximation of the true maximal invariant set. To overcome this limitation and find a larger, less conservative set, an iterative enlarging method can be applied. Such a method systematically expands a known invariant set by adding new, provably safe points within the state constraint set. In this subsection, we review an enlarging procedure based on the work of \cite{fiacchini2010computation}.

\begin{remark}
    \cite{fiacchini2010computation}:
Let Assumptions \ref{assump1} and \ref{assump2} hold. Consider a polytope $\Omega = \{x \in \mathbb{R}^n : Hx \le \mathbf{1}\} \subseteq X$, with $H \in \mathbb{R}^{n_h \times n}$, and $\lambda_\omega \in [0, 1]$, such that Theorem \ref{theo} holds for $\Omega$, and, given a new point $\hat{x} \in X$, define the set $\hat{\Omega} = \text{co}(\Omega \cup \hat{x})$. If there exists an input $\hat{u} \in U$ such that $F(\hat{x}, \hat{u}, H_i^T; \theta^{*,s}) \le \lambda_\omega - \phi_W(H_i^T)$, for every $i \in \mathbb{N}_{n_h}$ and $\forall s \in \mathbb{N}_{n_q}$, then $\hat{\Omega}$ is a robust $\lambda_\omega$-contractive set for system (1) and constraints $x \in X$ and $u \in U$.  \label{conv}  
\end{remark}

The authors in \cite{fiacchini2010computation} show that an existing $\lambda$-contractive set can be enlarged to reduce conservatism. For low-dimensional systems, they suggest that one can generate random candidate points in the state space and test if they can be added to the set as mentioned in Remark \ref{conv}. However, this approach becomes inefficient and non-trivial for high-dimensional systems. To address this, an alternative, more systematic procedure involves solving the following convex optimization problem to determine a point $\hat{x} \in X$ to enlarge $\Omega = \{x \in \mathbb{R}^n : Hx \le \mathbf{1}\}$. Given $c \in \mathbb{R}^n$
\begin{equation} \label{eq:enlarging_problem}
\begin{aligned}
    \max_{\hat{x} \in X, \hat{u} \in U} \quad & c^T \hat{x} \\
    \text{s.t.} \quad & F(\hat{x}, \hat{u}, H_i^T; \theta^{*,s}) \le \lambda_\omega - \phi_W(H_i^T), \\
    &i \in \mathbb{N}_{n_h}, \forall s \in \mathbb{N}_{n_q},
\end{aligned}
\end{equation}
where the enlarged $\lambda$-contractive set is $\hat{\Omega} = \text{co}(\Omega \cup \hat{x})$ where $\Omega \subseteq \hat{\Omega}$. In the following section, we will demonstrate the effectiveness of this approach with a numerical example.

\section{Simulation Results}\label{sec:simulations}

In this section, we illustrate the effectiveness of our proposed method to compute the largest possible invariant set within the state constraints for noisy nonlinear systems in a data-driven manner.

\textit{Case Study}: We consider the discretized system dynamics presented in \cite{cannon2003nonlinear} given by the equation:

\begin{align} 
    x_{k+1} &= \begin{bmatrix} 1 & \tau \\ \tau & 1 \end{bmatrix} x_k + \tau \left\{ \mu \begin{bmatrix} 1 \\ 1 \end{bmatrix} + (1-\mu) \begin{bmatrix} 1 & 0 \\ 0 & -4 \end{bmatrix} x_k \right\} u_k \label{testsys} \\
    &= \begin{bmatrix} a_1 & a_2 \\ a_4 & a_5 \end{bmatrix} x_k + \left\{ \begin{bmatrix} b_1 \\ b_2 \end{bmatrix} + \begin{bmatrix} a_3 & 0 \\ 0 & a_6 \end{bmatrix} x_k \right\} u_k, \label{eq:param_model}
\end{align}
where the parameters are $\tau=0.01$ and $\mu=0.9$. The parameters $\{a_1,a_2,\cdots,b_1,b_2\}$ are assumed unknown throughout the simulation. The constraints on the input and state are given by $U = \{u \in \mathbb{R} : |u| \le 2\}$ and $X = \{x \in \mathbb{R}^2 : \|x\|_{\infty} \le 4\}$, respectively. The uncertainty set for the continuous system is bounded by $W = \{w \in \mathbb{R}^{n} : \|w\|_{\infty} \le 0.4\}$, for which the upper bound is known. The data dictionary, $\mathcal{D} = \{(x_k, u_k, x_{k+1})\}_{k=0}^{k=30}$, is generated using (\ref{testsys}).

\textit{Simulation Environment}: The numerical simulations presented were carried out in Python/Matlab environment on an Apple M4 processor and 32 GB of RAM.

To construct the consistency set defined in \eqref{eq:consisten}, we first generate a dataset by sampling trajectories from the system \eqref{testsys}, collecting a total of 30 state-input pairs. Once the consistency set is constructed, its vertices are calculated using the \texttt{con2vert} function in MATLAB. For the given dataset size and noise bound, this computation yields $n_q=667$ vertices and takes approximately 1.3 seconds. Now, we need to construct the function $F$ as mentioned in (\ref{eq:F_theta}). To do this, one possible choice for the decomposition can be the following:


\begin{equation}\label{eq:decompo}
\begin{aligned}
    g_1(x_k, u_k) &= a_1 x_{1,k} + a_2 x_{2,k} + b_1 u_k \\ 
    &\quad + \frac{|a_3|}{4}\big(x_{1,k} + \text{sgn}(a_3)u_k\big)^2, \\
    h_1(x_k, u_k) &= \frac{|a_3|}{4}\big(x_{1,k} - \text{sgn}(a_3)u_k\big)^2, \\
    g_2(x_k, u_k) &= a_4 x_{1,k} + a_5 x_{2,k} + b_2 u_k \\
    &\quad + \frac{|a_6|}{4}\big(x_{2,k} - \text{sgn}(a_6)u_k\big)^2, \\
    h_2(x_k, u_k) &= \frac{|a_6|}{4}\big(x_{2,k} + \text{sgn}(a_6)u_k\big)^2,
\end{aligned}
\end{equation}

Recall that the true parameter vector $\theta = [a_1, a_2, a_3, a_4, a_5, a_6, b_1, b_2]^T$ is unknown. Therefore, to evaluate the robust invariance condition in Theorem \ref{theo}, the parameters in (\ref{eq:decompo}) are replaced by the corresponding elements of each consistency set vertex $\theta^{*,s}$ for $s \in \mathbb{N}_{n_q}$. By substituting these vertex parameters into the decomposition and constructing the parameterized function $F(\cdot, \cdot, \cdot; \theta^{*,s})$ for every calculated vertex, we are able to use Algorithm \ref{alg:data_driven_compute_set} to find the largest possible polytope within $X$.

\begin{figure}[t]
    \centering
    \includegraphics[width=0.4\columnwidth]{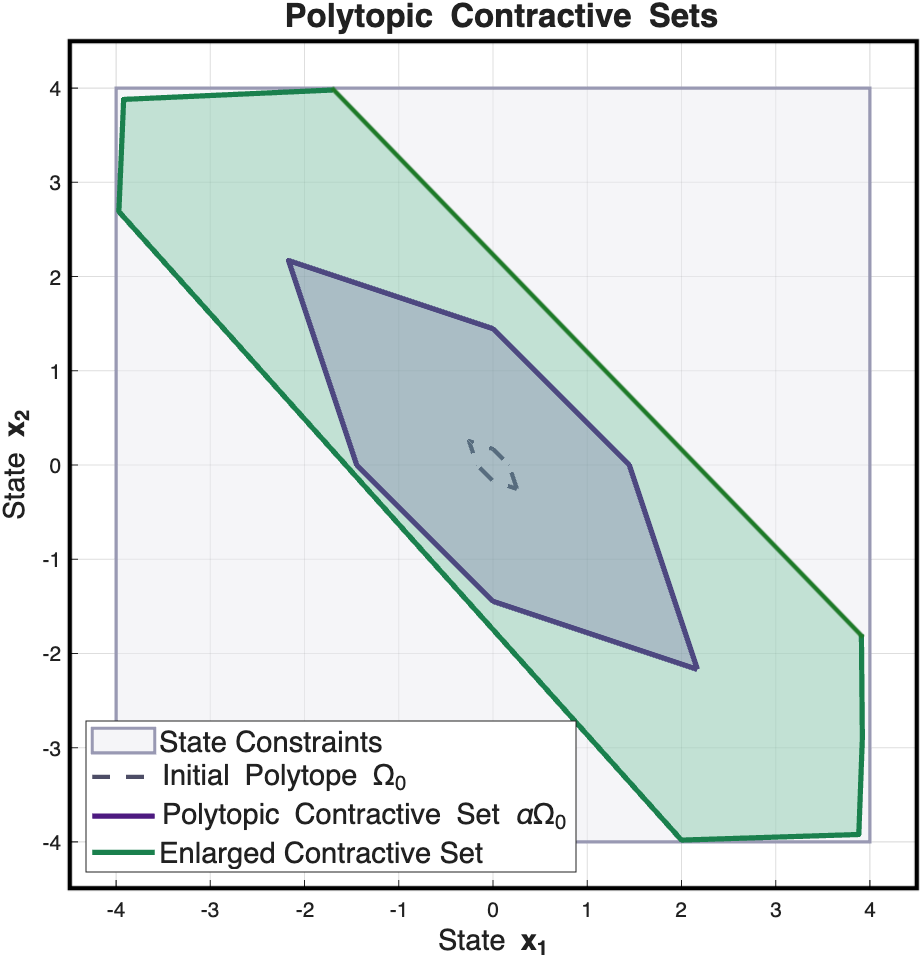}
    \caption{Data-driven computation of largest contractive set (purple solid line) within the state constraint set.}
    \label{fig:poly} 
\end{figure}

Figure \ref{fig:poly} illustrates the computational results of Algorithm \ref{alg:data_driven_compute_set}. The outer box represents the state constraint set, defined by $\|x\|_{\infty} \le 4$. The small, dashed-line ellipse at the center is the initial candidate polytope, $\Omega_0$, that provides the initial shape for the algorithm. The large, purple shaded region is the final robustly $\lambda_\omega$-contractive set computed by our data-driven method. This set is the result of scaling the initial polytope $\Omega_0$ by the factor $\alpha = 8.57$, which was determined by the algorithm to be the largest possible scaling that guarantees robust invariance for all systems compatible with available information. Although the data-driven $\lambda_\omega$-contractive set computed in Fig. \ref{fig:poly} is notably large, it remains a conservative inner approximation of the true maximal invariant set. This conservatism can be systematically reduced by employing the iterative enlarging technique detailed in Remark \ref{conv}. The green enlarged polytope presents the expanded set obtained through this procedure. Specifically, the introduction of 20 random candidate points within the allowable state constraints allows the algorithm to augment the original polytope, resulting in a demonstrably less conservative $\lambda_\omega$-contractive set.

\section{Conclusion}\label{sec:conclusion}
This paper introduced a novel data-driven framework for the approximation of robust polytopic $\lambda_\omega$-contractive sets in uncertain nonlinear systems. The proposed methodology overcomes the limitations of traditional model-based techniques by synthesizing provably invariant sets directly from a finite sequence of noisy state-input measurements. By leveraging the properties of DC functions, we successfully extended established model-based sufficient conditions for set invariance to the data-driven domain. A numerical case study validated the tractability and efficacy of the proposed optimization-based algorithm, demonstrating its potential for robust control applications.

\section{Appendix}

\begin{proof}[Proof of Theorem \ref{theo}]
Under Assumption \ref{assump4}, the consistency set is a compact polytope. This implies that for any $\theta$ within (\ref{eq:consisten}), we can express $\theta$ as a convex combination of the set's vertices, such that $\theta = \sum_{s=1}^{n_q} \beta_s\theta^{*,s}$, with coefficients $\beta_s \ge 0$ and $\sum_{s=1}^{n_q} \beta_s = 1$. Therefore, each vertex of the consistency set is interpreted as a "worst-case" realization of the unknown system parameters. Now, the dynamics associated with the parameter vector $\theta$ within the consistency set can be written as follows:

\begin{equation}
\begin{aligned}
    f(x_k,u_k;\theta) =& \sum_{s=1}^{n_q} \beta_sg(x_k,u_k;\theta^{*,s})-\sum_{s=1}^{n_q}\beta_sh(x_k,u_k;\theta^{*,s}),\\
\end{aligned}\label{decomp}
\end{equation}
where $g,h$ are the decomposed convex vector functions of the system dynamics associated with the $\theta^{*,s}$ vertex. Let (\ref{decomp}) be the decomposition for system corresponding to $\theta^{*,s}$ vertex in consistency set and the definition of $F(x, u, c; \theta^{*,s})$ proposed in (\ref{eq:F_theta}), we can write: 

\begin{equation}
\begin{aligned}
    F(x,u,c;\theta) =& \sum_{s=1}^{n_q}  \beta_s F(x,u,c;\theta^{*,s}).
\end{aligned}\label{F}
\end{equation}

Based on Property (\ref{propconv}), the function $F$ is convex. 
Considering the initial polytope $\Omega = \{x \in \mathbb{R}^n : Hx \le \mathbf{1}\} \subseteq X$, we reach at the following relationship:
\begin{equation}
F(v^p, u^p_v, H_i^T; \theta) = \sum_{s=1}^{n_q} \beta_s F(v^p, u^p_v, H_i^T; \theta^{*,s}), \label{jens}   
\end{equation}
where $v^p$ denotes the vertex of $\Omega$, and $u^p_v$ represents the input of the system at $v^p$ vertex. We can now adapt the sufficient condition for $\lambda_\omega$-contractiveness, originally proposed for known nonlinear systems in \cite{fiacchini2010computation} (Property \ref{condi}). It can be said that each vertex $\theta^{*,s}$ is associated with a worst case system parameters in consistency set and if for all $\beta_s$ in (\ref{F}), the condition (\ref{condi}) holds, the set $\Omega$ will be $\lambda_\omega$-contractive for all systems compatible with the available information.

It remains to show that there exists a control law $u$ such that for any initial state $x_0 \in \Omega$, the trajectory of the system \eqref{eq:system} subject to process noise satisfies $x_k \in \lambda_\omega^k \Omega$ for all $k \ge 0$.
To prove the existence of an admissible control law for any initial state $x \in \Omega$, we explicitly construct it as follows: Since $\Omega$ is a polytope, any $x$ can be expressed as a convex combination of its vertices $\{v^1, \dots, v^{n_v}\}$:
$$ x = \sum_{p=1}^{n_v} \zeta_p v^p, \quad \text{with } \zeta_p \ge 0, \sum_{p=1}^{n_v} \zeta_p = 1, $$
Let $u^p_v$ be the control actions at the vertices that satisfy the condition in Theorem \ref{theo}. We define the control law $u \in U$ using the same convex combination coefficients $\zeta_p$:
$$ u = \sum_{p=1}^{n_v} \zeta_p u^p_v, $$
Since the input set $U$ is convex (by Assumption \ref{assump1}), this control law is admissible for all $x \in \Omega$.

Next, we must show that for any $x_0 \in \Omega$ and any disturbance $w_0 \in W$, the successor state $x_{k+1} = f(x_k, u_k; \theta) + w_k$ is contained in $\lambda_\omega \Omega$ for all $\theta$ in the consistency set $\mathcal{P}_1$.

From the properties of the support function and the function $F$, we have the following upper bound for all $\theta$, and $\forall i \in \mathbb{N}_{n_h}$:
\begin{equation}
\begin{aligned}
H_i(f(x_k, u_k; \theta) + w_k) \le F(x_k, u_k, H_i^T; \theta) + \phi_W(H_i^T),        
\end{aligned}
\end{equation}

As established, the function $F(x,u,c;\theta)$ is convex in the pair $(x,u)$. Using (\ref{jens}) and
as shown in the proof of Theorem \ref{theo}, the condition $F(v^p, u^p_v, H_i^T; \theta) \le \lambda_\omega - \phi_W(H_i^T)$ holds for all $\theta$ in consistency set. We can write:
\begin{align*}
    H_i x_{k+1} &\le \left( \sum_{p=1}^{n_v} \zeta_p F(v^p, u^p_v, H_i^T; \theta) \right) + \phi_W(H_i^T) \\
    &\le \left( \sum_{p=1}^{n_v} \zeta_p (\lambda_\omega - \phi_W(H_i^T)) \right) + \phi_W(H_i^T) \\
    &= (\lambda_\omega - \phi_W(H_i^T)) \sum_{p=1}^{n_v} \zeta_p + \phi_W(H_i^T) \\
    &= \lambda_\omega - \phi_W(H_i^T) + \phi_W(H_i^T) = \lambda_\omega,
\end{align*}
Since $H_i x_{k+1} \le \lambda_\omega$ for all facets $i \in \mathbb{N}_{n_h}$ and vertices of consistency set, we have shown that $x_{k+1} \in \lambda_\omega \Omega$. 
\end{proof}

\bibliographystyle{IEEEtran}
\bibliography{Ref}

\end{document}